\newtheorem{theorem}{Theorem}
\newtheorem{proposition}{Proposition}
\theoremstyle{definition}
\newtheorem{definition}{Definition}
\theoremstyle{remark}
\newtheorem{remark}{Remark}
\newcommand{\trsp}[1]{#1^\mathsf{T}} 
\title{Positivity Proofs for Linear Recurrences with Several Dominant Eigenvalues}
\author{Alaa Ibrahim}
\affiliation{\institution{INRIA, ENS de Lyon, LIP}
\city{Lyon}
\country{France}}
\begin{document}

\copyrightyear{2025} \acmYear{2025}
\acmConference[ISSAC
'25]{International Symposium on Symbolic and Algebraic
Computation}{2025}{Guanajuato, Mexico}
\acmBooktitle{International Symposium on Symbolic and Algebraic
Computation (ISSAC '25), 2025, Guanajuato, Mexico}
\begin{abstract}
Deciding the positivity of a sequence defined by a linear recurrence and initial conditions is, in general, a hard problem. 

When the coefficients of the recurrences are constants, decidability has only been proven up to order 5. The difficulty arises when the characteristic polynomial of the recurrence has several roots of maximal modulus, called dominant roots of the recurrence.

We study the positivity problem for recurrences with polynomial coefficients, focusing on sequences of Poincaré type, which are perturbations of constant-coefficient recurrences. The dominant eigenvalues of a recurrence in this class are the dominant roots of the associated constant-coefficient recurrence.

Previously, we have proved the decidability of positivity for recurrences having a unique, simple, dominant eigenvalue, under a genericity assumption. The associated algorithm proves positivity by constructing a positive cone contracted by the recurrence operator.

We extend this cone-based approach to a larger class of recurrences, where a contracted cone may no longer exist. The main idea is to construct a sequence of cones. Each cone in this sequence is mapped by the recurrence operator to the next. This construction can be applied to prove positivity by induction.

For recurrences with several simple dominant eigenvalues, we provide a condition that ensures that these successive inclusions hold. Additionally, we demonstrate the applicability of our method through examples, including recurrences with a double dominant eigenvalue.




\end{abstract}
\maketitle{}

\section{Introduction}
A sequence $(u_n)_{n\in \mathbb{N}}$ of real numbers is called \emph{P-finite} of \emph{order} $d$ if it satisfies a linear recurrence of the form
 \begin{equation}\label{rec}
    p_d(n)u_{n+d}=p_{d-1}(n) u_{n+d-1}+\dots+p_0(n)u_n,\qquad n\in \mathbb{N},
 \end{equation}
with coefficients $p_i\in\mathbb R[n]$. The sequence $(u_n)_n$ is said to be positive if each term of the sequence is non-negative, i.e., $u_n\geq 0$ for all $~n\in\mathbb{N}$.

When the coefficients $p_i$ are constants in $\mathbb{R}$, the sequence is called C-finite. 
In this work, we take the assumption $0\not\in p_d(\mathbb{N})$. Then, the sequence is completely defined by the recurrence relation and the first $d$ terms of the sequence, $(u_0,u_1,\dots,u_{d-1})$. 

The \emph{Positivity Problem} consists in deciding whether a sequence $(u_n)_n$ is positive, given the polynomials $(p_i)_{i=1}^d$ and initial conditions. In this work, we consider the positivity problem over rationals: $p_i\in\mathbb{Q}[n]$ and $(u_0,\dots,u_{d-1})\in\mathbb{Q}^d$.

For instance, the following inequality on hypergeometric functions \cite{f2a34dbf-37e5-3f3a-8559-d3d2b37f29f8} $$f(x)={}_2F_1(\frac{1}{2}, \frac{1}{2}; 1; x^2)- (1+x) ~{}_2F_1(\frac{1}{2}, \frac{1}{2}; 1; x)\geq 0,~x\in(0,1),$$
follows from the positivity of the coefficients $(a_n)_n$ of the Taylor expansion of $f(x)=\sum_{n\geq 0} a_nx^n$. Thus, proving the inequality reduces to deciding the positivity of a P-finite sequence whose order can be further reduced to 2, as shown in \cref{sec:Examples}.


This reduction is partially due to closure properties of P-finite sequences under addition, product and Cauchy product.  Furthermore, for any \(\ell \in \mathbb{N}_{>0}\) and \(q \in \{0, \dots, \ell-1\}\), the subsequence \((u_{\ell n+q})_{n\in\mathbb{N}}\) satisfies a linear recurrence of order at most \(d\). All these operations are effective, meaning that recurrences for the resulting sequences can be computed given recurrences for the input~\cite{Stanley1999}.

Thus, various problems can be reduced to positivity. For example, proving monotonicity ($u_{n+1}\ge u_n$), convexity ($u_{n+1}+u_{n-1}\ge 2u_n$), log-convexity ($u_{n+1}u_{n-1}\ge u_n^2$) or more generally an inequality with another P-finite sequence ($u_n\ge v_n$), can be interpreted as instances of the positivity problem. Moreover, the reduction of the famous Skolem problem to Positivity \cite{ibrahim2024positivityproofslinearrecurrences} highlights the hardness of the positivity problem.

The Positivity Problem arises in multiple domains. In mathematics, it appears in numerous inequalities within combinatorics \cite{mitrinovic1964elementary, ScottSokal2014}, number theory \cite{StraubZudilin2015} and the theory of special functions  ~\cite{Pillwein2008}.  In computer science, positivity occurs in several areas such as the study of numerical stability in computations involving sums of power series \cite{SerraArzelierJoldesLasserreRondepierreSalvy2016} and the verification of loop termination \cite{HumenbergerJaroschekKovacs2017,HumenbergerJaroschekKovacs2018}. Other applications can be found in floating-point error analysis \cite{BoldoClementFilliatreMayeroMelquiondWeis2014} and in biology \cite{MelczerMezzarobba2022,YuChen2022,BostanYurkevich2022a}.\\

\textbf{Decidability:} The \emph{dominant eigenvalues} of the recurrence play a fundamental role in deciding positivity. For C-finite sequences, the starting point for proving positivity is the closed-form expression of the sequence. Given the characteristic polynomial of the recurrence  
\[
\chi(x) = x^d - \frac{p_{d-1}}{p_d}x^{d-1} - \dots - \frac{p_0}{p_d},
\]  
with distinct roots \(\lambda_1, \dots, \lambda_k\), it is well known that the sequence takes the form  
\[
u_n = C_1(n) \lambda_1^n + \dots + C_k(n) \lambda_k^n,
\]  
where each \(C_i(n)\) is a polynomial determined by the initial conditions \(u_0, \dots, u_{d-1}\).  

In this representation, the roots \(\lambda_i\) with the largest modulus govern the asymptotic behavior of the sequence; they are also called the \emph{dominant eigenvalues}. In all previous works, deciding positivity for C-finite sequences has been addressed through discussions on the dominant eigenvalues. 
First, Ouaknine and Worrell proved the decidability for recurrences of order up to~5. This limitation is related to the number of dominant eigenvalues. Furthermore, for $d=6$ positivity is reduced to open problems in Diophantine approximation~\cite{OuaknineWorrell2014a}. For C-finite recurrences with one dominant eigenvalue, decidability is established for arbitrary order. Moreover, when the characteristic polynomial of the sequence does not have multiple roots, decidability extends to order up to~9~\cite{OuaknineWorrell2014b}. For reversible recurrences of integers (reversible means that unrolling the recurrence backwards produces only integers for negative indices), decidability of positivity is known for order up to~11 and this goes up to~17 if the recurrence is both reversible and with square-free characteristic polynomial~\cite{KenisonNieuwveldOuaknineWorrell2023}.

In summary, even in the case of C-finite sequences, the positivity problem is not fully understood. Moreover, for P-finite sequences, the situation is further complicated since their is no “simple” basis of solutions, and it is difficult to relate the asymptotic behavior of the sequence with its initial conditions even for small orders \cite{MR4309740}.\\

\textbf{Previous works:} In a very general setting, including P-finite sequences, the first algorithmic work is due to Gerhold and Kauers \cite{GerholdKauers2005}. Their method consists in looking iteratively for the first integer $m$ such that the following implication holds$$u_n\ge0\wedge u_{n+1}\ge0\wedge\dots\wedge u_{n+m}\ge0\Rightarrow u_{n+m+1}=\sum_{i=0}^m q_i(n)u_{n+i} \geq0.$$This implication is handled as a decision problem in the existential theory of the reals, and it is tested using cylindrical algebraic decomposition~\cite{collins1975quantifier}.
This method leads to automatic proofs for many important inequalities~\cite{GerholdKauers2006}, though the termination of this procedure remains unclear.

Later, Kauers and Pillwein have adapted that  method to control its termination for P-finite sequences. Positivity is concluded by proving $u_{n+1}>\beta u_n$ for some positive $\beta$ using the same approach \cite{Kauers2007a,KauersPillwein2010a}. 
This yields the decidability for recurrences of order $d=2$,  with a \emph{unique simple dominant eigenvalue}, under a \emph{genericity} assumption on the initial conditions. Under the same constraints, decidability also extends to a subclass of recurrences of order $d= 3$~\cite{Pillwein2013}. 

More recently, this decidability result was extended to arbitrary order using a cone-based approach. For recurrences having a unique simple dominant eigenvalue, positivity is proven, \emph{generically}, by constructing a positive cone contracted by the recurrence operator ~\cite{MR4687164}. Later, the construction of the cone was refined using the theory of Perron-Frobenius for cones \cite{ibrahim2024positivityproofslinearrecurrences}. The aim of the present work is to apply the same geometric approach to prove positivity for sequences in a larger class. This case was partially studied for second-order recurrences \cite{KenisonNieuwveldOuaknineWorrell2023}.

\textbf{Contribution:} We propose a procedure in \cref{algorithm:General} to prove positivity for a subclass of P-finite sequences. For recurrences of arbitrary order with multiple dominant eigenvalues, all simple, we provide in \cref{cor:decidability} asymptotic conditions on the eigenvalues of the recurrence that ensure the decidability of the positivity problem, under certain assumptions on the asymptotic behavior of the sequence. Additionally, using this procedure, we prove positivity for sequences beyond the scope of our earlier works.\\

This work is structured as follows: First, the background on dominant eigenvalues and contracted cones is reviewed in \cref{sec:Cones}. The general algorithm is presented in \cref{sec:Example}. In \cref{sec:Separation}, we show a method to verify the conditions required for the input of the given algorithm. The construction of the cones is provided in \cref{sec:construction}. The main result \cref{theo:cond} is stated in \cref{sec:theo}. Examples are given in \cref{sec:Examples}, followed by experimental improvements discussed in \cref{sec:practice}.
 \section{Contracted Cones associated to recurrences}\label{sec:Cones}
In this article, positivity is studied through a geometric approach. The recurrence relation is converted to a recurrence on vectors, and positivity is established by showing that these vectors remain in positive cones.
\subsection{Conversion to Vector Form}
Let $U_n$ be the vector defined by
\[
U_n = \trsp{(u_n, u_{n+1}, \dots, u_{n+d-1})}.
\]  
The recurrence relation in \cref{rec} is converted into the vector form
$$U_{n+1} = A(n) U_n,$$ where \( A(n) \in \mathbb{Q}(n)^{d \times d} \) has the shape of a \emph{companion matrix}.
The positivity of the sequence \( (u_n)_{n \in \mathbb{N}} \) is equivalent to ensuring that 
$$
U_n \in \mathbb{R}_{\geq 0}^d, \quad \forall n \in \mathbb{N}.$$


\subsubsection{Poincaré type} In this work, we consider P-finite sequences of Poincaré type.  They can be seen as perturbation of C-finite sequences.
\begin{definition}[Poincaré type]
The recurrence relation \cref{rec} is said of Poincaré type if the polynomial $p_d$ has maximal degree, i.e.$\lim\limits_{n\to\infty} A(n)=A\in\mathbb{Q}^{d\times d}$.
\end{definition}
\begin{remark}
The condition of being a recurrence of Poincaré type is not really a restriction, as it is always possible to reduce the positivity problem of a P-finite sequence to that of the solution of a linear recurrence of Poincaré type by re-scaling~\cite[\S2]{MezzarobbaSalvy2010}. 
\end{remark}
\begin{definition}[Dominant eigenvalues]\label{def:dominant-eigenvalues}
Let
$\lambda_1,\dots,\lambda_k$ be the distinct complex eigenvalues of the matrix $A$, numbered by decreasing modulus so that 
\[|\lambda_1|=|\lambda_2|=\dots=|\lambda_v|>|\lambda_{v+1}|\geq |\lambda_{v+2}|\dots \geq |\lambda_k|.\]
Then $\lambda_1,\dots,\lambda_v$ are called the \emph{dominant eigenvalues of $A$} (or equivalently \emph{dominant roots} of its characteristic polynomial). An eigenvalue is called \emph{simple} when it is a simple root of the characteristic polynomial.
\end{definition}
In this case, the \emph{eigenvalues of the recurrence} refer to the eigenvalues of the matrix$~A$.

\subsection{Cones}
A subset \( K \) of \( \mathbb{R}^d \) is called a \emph{cone} if it is closed under addition and multiplication by positive scalars. 
\subsubsection*{Proper cone} 
A cone \( K \) in \( \mathbb{R}^d \) is called \emph{proper} if it satisfies the following properties:
\begin{itemize}
    \item[--] \( K \) is \emph{pointed}, i.e., \( K \cap (-K) = \{0\} \).
    \item[--] \( K \) is \emph{solid}, i.e., its interior \( K^{\circ} \) is non-empty.
    \item[--] \( K \) is \emph{closed} in \( \mathbb{R}^d \).
\end{itemize}
\subsubsection*{Positive cone}
A cone \( K \) is called \emph{positive} if  \( K\subset\mathbb{R}_{\ge0}^d \).
\subsubsection*{Contracted cone}
A cone $K$ is \emph{contracted} by a real matrix $A$ if $A(K\setminus \{0\})
\subset K^{\circ}$, where $K^\circ$ denotes the interior of $K$.
\subsection{Contracted Cones}
 For recurrences having a unique simple dominant eigenvalue, positivity is \emph{generically} decidable \cite{ibrahim2024positivityproofslinearrecurrences}. This follows from Vandergraft's theorem stated below.
\begin{theorem}
[Vandergraft] \cite[Thm. 4.4]{Vandergraft1968}
\label{thm:Vandergraft}
Let  $A$  be a real matrix. There exists a proper cone $K$ contracted by  $A$  if and only if $A$ has a unique dominant eigenvalue $\lambda_{1}>0$, and $\lambda_1$ is simple.
\end{theorem}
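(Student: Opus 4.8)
The plan is to prove the two implications separately: the ``if'' direction by an explicit construction from the spectral data, and the ``only if'' direction by a Perron--Frobenius-type argument that exploits the \emph{strictness} of the contraction. For sufficiency, assume $A$ has a unique dominant eigenvalue $\lambda_1>0$, simple. After replacing $A$ by $A/\lambda_1$ I may assume $\lambda_1=1$ and $|\lambda|<1$ for every other eigenvalue. Since $1$ is a real simple eigenvalue, $\mathbb{R}^d=\mathbb{R}v_1\oplus F$ with $v_1$ a real $1$-eigenvector and $F$ the $A$-invariant sum of the remaining generalized eigenspaces; as the spectral radius of $A|_F$ is $<1$, there is a norm $\|\cdot\|_F$ on $F$ with $\|A|_F\|\le r<1$. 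Writing $x=tv_1+y$ with $t\in\mathbb{R}$, $y\in F$, I would take
\[
  K=\{\,tv_1+y \ :\ t\ge\|y\|_F\,\}.
\]
It is immediate that $K$ is closed, pointed (if $x,-x\in K$ then $t\ge 0$ and $-t\ge 0$, so $x=0$) and solid (it contains $v_1$ in its interior), hence proper; and for $x\in K\setminus\{0\}$ one has $t>0$, $Ax=tv_1+Ay$, and $\|Ay\|_F\le r\|y\|_F\le rt<t$, so $Ax\in K^\circ$. Note this yields only the proper/contracted part of the statement, not a \emph{positive} cone.

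For necessity, let $K$ be proper and contracted by $A$. First I would produce the Perron eigenvector in the interior: choosing a linear functional $\phi$ with $\phi>0$ on $K\setminus\{0\}$, the base $B=\{x\in K:\phi(x)=1\}$ is compact and convex, and since $A(K\setminus\{0\})\subset K^\circ$ the map $x\mapsto Ax/\phi(Ax)$ is a continuous self-map of $B$; Brouwer's fixed-point theorem gives $v_1\in B$ with $Av_1=\mu v_1$, $\mu=\phi(Av_1)>0$, and then $v_1=\mu^{-1}Av_1\in K^\circ$. Next, for any eigenvalue $\lambda$ with real invariant subspace $W$ (the span of $\mathrm{Re}\,w,\mathrm{Im}\,w$ for an eigenvector $w$), a small ball around $v_1$ inside the affine subspace $v_1+W$ lies in $K$; since $\mu^{-1}A(K)\subset K$ and $K$ is closed and pointed, letting the $(\mu^{-1}A)$-iterates of such points blow up when $|\lambda|>\mu$ would place a full line of $K$ through the origin, a contradiction. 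Hence $\mu=\rho(A)=:\lambda_1>0$.

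It remains to rule out non-simplicity of $\lambda_1$ and the presence of other eigenvalues of modulus $\lambda_1$, which is the crux and is where $A(K\setminus\{0\})\subset K^\circ$ is used rather than merely $AK\subset K$. If $\lambda_1$ had geometric multiplicity $\ge 2$, its eigenspace $E\ni v_1$ would be at least a plane, $K\cap E$ a genuine wedge on which $A$ acts as $\lambda_1\,\mathrm{Id}$, so a nonzero boundary ray of $K\cap E$ — which lies on $\partial K$ — would be mapped into $\partial K$: contradiction. If $\lambda_1$ had a Jordan block of size $\ge 2$, pick $w$ with $(A-\lambda_1)w=v_1$; then $v_1\pm\varepsilon w\in K$ for small $\varepsilon>0$, and dividing $A^n(v_1\pm\varepsilon w)=(\lambda_1^n\pm\varepsilon n\lambda_1^{n-1})v_1\pm\varepsilon\lambda_1^n w$ by the positive scalar $\varepsilon n\lambda_1^{n-1}$ and letting $n\to\infty$ forces both $v_1\in K$ and $-v_1\in K$: contradiction with pointedness. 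Finally, if $\lambda=\lambda_1e^{i\theta}$ with $\theta\ne 0$ were another dominant eigenvalue, let $V$ be the $A$-invariant subspace spanned by $v_1$ and the real plane of $\lambda,\bar\lambda$; then $C=K\cap V$ is proper in $V$ and $A$-invariant, while $(\lambda_1^{-1}A)|_V$ is the identity on $\mathbb{R}v_1$ times a rotation on the plane, so $C$ is invariant under all its powers and hence, by closure, under the whole rotation group, i.e.\ $C$ is a circular cone about $\mathbb{R}v_1$; a point on its boundary circle then lies on $\partial K$ and is sent by $A$ to another boundary point, contradicting strict contraction. This completes the necessity.

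The hard part I anticipate is this last step: extracting simplicity of $\lambda_1$ and the exclusion of rotational eigenvalues of equal modulus from the strict inclusion, especially handling the circle-group argument carefully when $\theta/\pi$ is irrational. A cleaner but heavier alternative is to invoke Birkhoff's theorem: a cone contracted by $A$ has $A(K)$ of finite Hilbert diameter, so $\mu^{-1}A$ is a strict contraction of the Hilbert metric on $K^\circ$ and its powers converge to a rank-one projection onto $\mathbb{R}v_1$, which delivers simplicity and uniqueness of the dominant eigenvalue simultaneously — at the cost of importing the Birkhoff--Hopf machinery rather than the elementary arguments above.
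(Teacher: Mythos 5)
Your proof is correct. Note first that the paper never proves \cref{thm:Vandergraft} itself --- it is imported by citation --- and the only related argument it contains is the construction of \cref{sec:construction} together with \cref{pro:distance}, which amounts to the ``if'' direction specialized to companion matrices. Your sufficiency argument is that same construction in coordinate-free form: the cone $\{tv_1+y :\ t\ge\|y\|_F\}$ is exactly the paper's $K=\{T\alpha \mid \alpha_{1,1}\ge|\alpha_{i,j}|\}$ once $\|\cdot\|_F$ is the max-norm in the (conjugation-compatible) Jordan basis, and the paper's $\varepsilon$-rescaling of the Jordan chains is precisely your device for forcing $\|A|_F\|\le r<\lambda_1$; the paper's version additionally extracts the quantitative margin $(\lambda_1-|\lambda_2|)/2$ needed later for the sequence-of-cones argument, while yours needs only strictness. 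Your necessity argument (Brouwer on a base to get $v_1\in K^\circ$ with $\mu=\rho(A)$, then excluding extra geometric multiplicity, Jordan blocks, and equal-modulus rotational eigenvalues by forcing boundary rays back onto $\partial K$) has no counterpart in the paper and is sound: in the irrational-rotation case, $R^nC\subseteq C$ for all $n$ plus closedness gives invariance under the closure of $\{R^n\}$, which is a group, so $RC=C$ and boundary maps to boundary. The one spot to tighten is $\lambda=-\lambda_1$, where ``the real plane of $\lambda,\bar\lambda$'' degenerates to a line and the ``circular cone'' to a symmetric wedge; the same argument ($R^2=\mathrm{id}$ forces $RC=C$) closes it.
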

Under these conditions, $A$ is said to satisfy the \emph{contraction condition}.

\paragraph{Application to Positivity} Assume that the limit matrix \( A \) contracts a proper cone \( K \). Then, there exists an index \( N \) such that \( A(n) K \subset K \) for all \( n \geq N \). Thus, once \( U_n \) enters \( K \), all subsequent vectors \( U_n\) remain in \( K \). This forms the basis of the \emph{Positivity Proof} algorithm in \cite{ibrahim2024positivityproofslinearrecurrences} and leads to the following result:

\begin{theorem}\cite[Thm. 1]{ibrahim2024positivityproofslinearrecurrences}\label{thm:decidability}
For all linear recurrences of the form given in \cref{rec}, of order $d$ and of Poincar\'e
type, having a unique simple dominant eigenvalue and such that
$0\not\in p_0p_d(\mathbb N)$, the positivity of the solution
$(u_n)_n$ is decidable for any $U_0=(u_0,u_1,\dots,u_{d-1})$ outside
a hyperplane in $\mathbb{R}^d$.
\end{theorem}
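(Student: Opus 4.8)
The plan is to combine Vandergraft's theorem (\cref{thm:Vandergraft}) with a careful analysis of how the perturbation $A(n)\to A$ interacts with a contracted cone. First I would note that a unique simple dominant eigenvalue $\lambda_1$ is by hypothesis dominant, so $|\lambda_1|>|\lambda_i|$ for all $i\ge 2$; the decision problem splits according to whether $\lambda_1>0$ or $\lambda_1<0$ (if $\lambda_1$ is the unique dominant root of a real matrix it must be real). In the case $\lambda_1>0$, \cref{thm:Vandergraft} gives a proper cone $K$ contracted by $A$, i.e. $A(K\setminus\{0\})\subset K^{\circ}$. Since $A(n)\to A$ and contraction is an open condition on a compact section of the cone (intersect $K$ with a sphere), I would argue there is an $N$ with $A(n)K\subset K$ for all $n\ge N$ — this is the ``Application to Positivity'' paragraph made precise. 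Then the algorithm is: unroll the recurrence to compute $U_0,U_1,\dots$; if some $U_n$ has a negative coordinate, output \emph{not positive}; as soon as some $U_n$ with $n\ge N$ lands in $K^{\circ}$ (a decidable linear-algebraic test, since $K$ is semialgebraic — in fact one can take $K$ rational, or at least algebraic, from the constructive proof of Vandergraft), output \emph{positive}, because $U_m\in K\subset\mathbb R_{\ge 0}^d$ for all $m\ge n$ and the finitely many earlier terms have been checked directly.

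The crux is termination: I must show that for $U_0$ outside an explicit hyperplane, exactly one of the two stopping conditions fires. For this I would use the closed-form asymptotics. Writing the solution in the eigenbasis (Jordan form) of $A$, the normalized vector $U_n/\lambda_1^{n}$ converges to $c\, v_1$, where $v_1$ is the Perron eigenvector of $A$ and $c$ is a linear functional of $U_0$; the bad hyperplane is exactly $\{U_0 : c(U_0)=0\}$. If $c(U_0)>0$, then since $v_1$ lies in the interior of $K$ (Perron–Frobenius for cones / the fact that $v_1$ spans the unique dominant direction and $K$ is contracted to a neighborhood of it), $U_n/\lambda_1^n$ eventually enters $K^{\circ}$, hence so does $U_n$, and the algorithm halts declaring positivity. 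If $c(U_0)<0$, the same argument applied to $-K$ (or directly) shows $U_n/\lambda_1^n\to c\,v_1$ with $c\,v_1$ in the interior of $-K\subset\mathbb R_{\le 0}^d\setminus\{0\}$ away from the axes, so some coordinate of $U_n$ becomes strictly negative and the algorithm halts declaring non-positivity. For the $\lambda_1<0$ branch, I would pass to the subsequences $(u_{2n+q})_q$, whose limit matrices are $A^2$ with unique simple dominant eigenvalue $\lambda_1^2>0$, reducing to the previous case (here the hyperplane condition on $U_0$ must be transported back through this splitting, but it remains a hyperplane).

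I would expect the main obstacle to be making the transition from ``$A$ contracts $K$'' to ``$A(n)$ maps $K$ into $K$ for $n\ge N$'' fully rigorous together with the effectivity of finding such an $N$ and of testing $U_n\in K^{\circ}$: one needs that $K$ (or a suitable proper cone contracted by $A$) can be taken with an explicit, decidable description, and that the rate of convergence $A(n)\to A$ is controlled well enough to certify the inclusion from finitely many computations. A secondary subtlety is handling the genericity hyperplane uniformly — confirming that $c(U_0)\ne 0$ is decidable (it is a sign test on a rational linear form in $U_0$, computable from the $p_i$) and that on this hyperplane the problem is genuinely excluded rather than merely unresolved. Apart from these points, the argument is a routine assembly: Vandergraft supplies the cone, Perron–Frobenius places $v_1$ in its interior, the Poincaré asymptotics pin down the limiting direction, and finitely many unrollings plus one cone-membership certificate close the loop.
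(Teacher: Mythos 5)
Your overall strategy coincides with the one the paper relies on (the theorem is imported from \cite{ibrahim2024positivityproofslinearrecurrences}; the present paper only sketches the argument in the ``Application to Positivity'' paragraph): build a proper \emph{positive} cone contracted by the limit matrix $A$ via Vandergraft's theorem, use openness of the contraction on a compact section to obtain $A(n)K\subset K$ for all $n\ge N$, and terminate by showing that $U_n$ generically enters $K^{\circ}$ while the finitely many earlier terms are checked directly. Note that Vandergraft only yields a \emph{proper} cone, not a positive one; to get $K\subset\mathbb{R}_{\ge0}^d$ one needs the explicit construction of \cref{sec:construction}, where the dominant eigenvector $(1,\lambda_1,\dots,\lambda_1^{d-1})$ (positive because $\lambda_1>0$) is given weight $d$ in \cref{eq:solutions}, making $K(V_{i,j})$ positive by \cref{pro:positivity}. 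Your reduction of the case $\lambda_1<0$ to the even/odd subsections is fine.

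There are, however, two genuine inaccuracies in your termination argument, both coming from treating the sequence as C-finite. First, a P-finite solution admits no decomposition ``in the eigenbasis (Jordan form) of $A$''; the relevant tool is Poincar\'e--Perron theory, which provides a basis of solutions $h^{(i)}$ with $h^{(i)}_{n+1}/h^{(i)}_n\to\lambda_i$ but only $\lambda_1^{\,n+o(n)}$ growth, so $U_n/\lambda_1^n$ need not converge. The statement actually needed (and the one assumed explicitly in \cref{cor:decidability}) is that $U_n/\|U_n\|\to v_1/\|v_1\|$ whenever the coordinate of the solution on the dominant Perron solution is nonzero; the vanishing locus of that coordinate is the hyperplane of the statement. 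Second, and consequently, your claim that testing $c(U_0)\ne0$ is ``a sign test on a rational linear form in $U_0$, computable from the $p_i$'' is unfounded: this linear form is a connection coefficient to the asymptotic basis, and no algorithm is known to compute it. This is exactly why the theorem is phrased as ``decidable for $U_0$ outside a hyperplane'': the algorithm is a procedure that provably terminates for $U_0$ off the hyperplane, without ever deciding membership in the hyperplane itself.
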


\section{Sequences of contracted cones}\label{sec:Example}

The principle of our approach can be seen on an example where the contraction condition is not satisfied by the limit matrix $A$.

Consider the sequence \((u_n)_n\) solution of 
\[
4(n + 3)(n + 4)^2 u_{n+2} = (n + 3)(8n^2 + 48n + 73) u_{n+1} - (n + 2)(2n + 5)^2 u_n,
\]
with initial conditions \(u_0 = \frac{1}{64}\), \(u_1 = \frac{11}{768}\).
The characteristic polynomial of the recurrence is   
\[
\chi_u(X) = (X - 1)^2.
\]  
The dominant eigenvalue is not simple, thus no proper cone is contracted by the limit matrix \( A \), preventing the application of the algorithm \emph{Positivity Proof} in \cite{ibrahim2024positivityproofslinearrecurrences}.

However, for a fixed \( n \), the matrix \( A(n) \) has a unique simple dominant eigenvalue, which implies the existence of a cone \( K_n \) that is contracted by \( A(n) \). We can take \( K_n \) to be the following proper positive cone:
\[
K_n = \left\{ a \begin{pmatrix} 3 \\ 3 + \frac{\sqrt{2} - 6}{n} \end{pmatrix} + b \begin{pmatrix} 1 \\ 1 + \frac{-2 + 3\sqrt{2}}{n} \end{pmatrix} \mid a \geq 0, b \geq 0 \right\}.
\]
\begin{figure}
    \centering
    \includegraphics[width=0.7\linewidth]{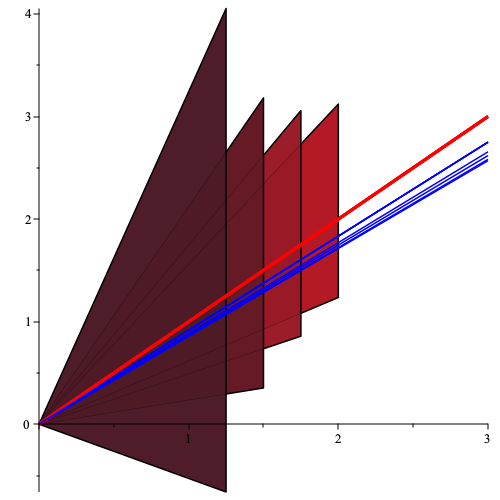}
    \caption{Cones \(K_n\) for \(n = 1\) to \(4\), where the color of \(K_n\) becomes lighter for increasing \(n\). The red vector is the limit of the cones, and the blue vectors represent \(U_0, \dots, U_4\).}
    \label{fig:ConesSequence}
\end{figure}

 It turns out that the sequence of contracted cones $(K_n)_n$ also satisfies
\[
A(n)K_n \subset K_{n+1}, \quad \text{for all } n \geq N = 3.
\]
In other words, for \(n \geq N\)
$$U_n \in K_n\implies U_{n+1} = A(n)U_n \in K_{n+1}\in\mathbb{R}^2_{>0}.$$
As shown in \cref{fig:ConesSequence}, for \(n =3\), \(U_3 \in K_3 \). By induction, $U_n\in K_n\subset\mathbb{R}^2_{>0}$ for all $n\geq N=3$. Additionally, $U_0,U_1,U_2$ are in $\mathbb{R}^2_{>0}$, thus the sequence \((u_n)_n\) is positive.

For this sequence, positivity is proved by constructing a sequence of contracted cones and induction. This approach extends to recurrences of Poincaré type where the recurrence operator $A(n)$ satisfies the contraction condition. More precisely, for $\lambda_{i,n}$ denoting the distinct eigenvalues of $A(n)$, we have
$$\lambda_{1,n}>|\lambda_{2,n}|\geq \dots \geq |\lambda_{k,n}|,\quad n\in\mathbb{N}$$
and $\lambda_{1,n}$ is simple. This approach is outlined in \cref{algorithm:General}.
\begin{algorithm}[h!]
\caption{Main Steps of the Positivity Algorithm}
\label{algorithm:General}
\KwIn{ $A(n)$ companion matrix $\in\mathbb Q(n)^{d\times d}$ such that 
\( A = \lim\limits_{n \to \infty} A(n)\in\mathbb{Q}^{d\times d} \);
$A(n)$ satisfies the contraction condition (for $n$ large enough); 
 $U_0\in\mathbb Q^d$ }
\KwOut{Positivity of $(U_n)_n$}
\begin{enumerate}
  \item[1.] \textbf{Cone Construction:} Compute
  \( K_n \subset \mathbb{R}_{\geq 0}^d \) s.t.
  \( A(n)(K_n \setminus \{0\}) \subset K_n^{\circ} \).
  
  \item[2.]\textbf{Inclusion Index:} Find \( N \in \mathbb{N} \) s.t. for all \( n \geq N \), \( A(n)K_n \subset K_{n+1} \). \\
  \item[3.] \textbf{Initial Conditions Check:} 
  Compute \( n_0 \geq N \) s.t. \( U_{n_0} \in K_{n_0} \), 
  and check if \( U_0, U_1, \dots, U_{n_0 - 1} \) are positive.
\end{enumerate}
\end{algorithm}

The termination of this algorithm is studied by observing the asymptotic behavior of the eigenvalues $\lambda_{i,n}$ as $n\to\infty$. 
\section{Eigenvalues ordering}\label{sec:Separation}
For a rational square matrix, the eigenvalues can be efficiently ordered by modulus using absolute separation bounds \cite[Theorem 1]{BugeaudDujellaFangPejkovicSalvy2022}. Let \( A(n) \) be a matrix in \( \mathbb{Q}(n)^{d\times d} \) with characteristic polynomial \( P(n,X) \), and assume that  
\[
\lim_{n\to \infty} A(n) = A \in \mathbb{Q}^{d \times d}.
\]
In the following, we show how the eigenvalues of $A(n)$, i.e. the roots of $P(n,X)$, can be ordered  by modulus for sufficiently large $n$.

First, compute the polynomial $Q(Y,X)\in\mathbb{Q}[Y,X]$ the numerator of $P(1/Y,X)$. By construction, $Q(0,X)$ is the characteristic polynomial of \( A \), up to a scalar.

By Puiseux theorem, the roots $X_j(Y)$ of \( Q(Y,X) \)  take the form  
\[
X_j(Y) = X_j + \sum_{k=1}^{\infty} c_k Y^{\frac{k}{m}}, 
\]
where \( X_j\in\mathbb{C} \) is a root of \( P(0,X) \), $m$ is a positive integer and $c_k$ is in $\mathbb{C}$. 

For sufficiently small \( Y \), the number of roots \( X_j(Y) \) with the same modulus remains constant. This allows us to define the sequence \( (\rho_i(Y))_{i=1}^{\ell} \) representing the distinct modulus of the roots \( X_j(Y) \). As $Y$ decreases to 0, we assume that $\rho_i(Y)$ have the following order
\[
\rho_1(Y) > \rho_2(Y) > \dots>\rho_{\ell }(Y).
\]
 We denote by \( q_i \) the number of roots with modulus \( \rho_i(Y)\). 
 \begin{remark}
If the $(q_i)_{i=1}^\ell$ are known, the roots $X_j(Y)$ can be grouped by modulus using Puiseux expansions up to sufficient order.
\end{remark}
\begin{proposition}
For a square free bivariate polynomial $Q(Y,X)\in\mathbb{Q}[Y,X]$, \cref{algorithm:roots} computes $(q_i)_{i=1}^{\ell}$.
\end{proposition}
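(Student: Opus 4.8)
The plan is to reduce the ordering of the moduli $|X_j(Y)|$ for small $Y>0$ to comparisons of Puiseux series, and then to bound effectively how many terms of those series are needed.

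Since $Q$ is square free, $\mathrm{disc}_X Q\not\equiv 0$, and the Newton--Puiseux algorithm computes in finitely many steps the common ramification index $m$ (classically bounded in terms of $\deg_X Q$) together with the $D:=\deg_X Q$ pairwise-distinct branches $X_j(Y)=X_j+\sum_{k\ge 1}c_{j,k}Y^{k/m}$, truncated at any prescribed order. Because $Q\in\mathbb Q[Y,X]$, for real $Y>0$ the conjugate $\overline{X_j(Y)}$ is again one of these branches, so $\mu_j(Y):=|X_j(Y)|^2=X_j(Y)\,\overline{X_j(Y)}$ is itself a convergent Puiseux series in $Y^{1/m}$ with real coefficients, computable from the $X_j(Y)$. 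The problem now reduces, for each pair $i,j$, to deciding whether $\mu_i(Y)-\mu_j(Y)$ vanishes identically and, if not, to the sign of its lowest-order term: a nonzero convergent Puiseux series has constant sign on a punctured neighbourhood of $0$, and since $t\mapsto\sqrt t$ is increasing this sign tells us which of $|X_i(Y)|$, $|X_j(Y)|$ is larger for small $Y>0$, while the identically-zero case means the moduli coincide there. Grouping the branches accordingly and sorting the classes yields exactly the multiplicities $(q_i)_{i=1}^\ell$ and the order $\rho_1(Y)>\dots>\rho_\ell(Y)$. Hence \cref{algorithm:roots} is correct as soon as it truncates each $\mu_j$ far enough to separate every pair with $\mu_i\ne\mu_j$ and to certify $\mu_i\equiv\mu_j$ for the rest.

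What remains is an a priori, computable bound $V$ on the $Y$-adic valuation of $\mu_i-\mu_j$ whenever it is nonzero, and I would obtain it structurally. A standard resultant construction --- for instance the primitive part of $\mathrm{Res}_X(Q(Y,X),\,X^D\,Q(Y,Z/X))$, cleared of denominators --- produces a polynomial $R(Y,Z)\in\mathbb Q[Y,Z]$ of controlled bidegree whose roots in $Z$ are the products $X_a(Y)X_b(Y)$; the $\mu_j(Y)$ are among them. Consequently each difference $\mu_i-\mu_j$ is a root in $Z$ of $\mathrm{Res}_W(R(Y,W),\,R(Y,W-Z))\in\mathbb Q[Y,Z]$, again of degrees in $Y$ and $Z$ bounded polynomially in $\deg Q$. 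A nonzero root of a polynomial $\sum_l a_l(Y)Z^l\in\mathbb Q[Y][Z]$ has $Y$-valuation bounded by the slopes of its Newton polygon, hence by $\max_l\deg_Y a_l$; taking $V$ to be this quantity, rescaled by $m$ to pass to the variable $Y^{1/m}$, suffices. Expanding all $X_j(Y)$, and therefore all $\mu_j(Y)$, to order $V$ makes the two remaining decisions effective: if $\mu_i$ and $\mu_j$ agree up to order $V$ they agree identically, otherwise their first discrepancy is already visible and its sign is read off.

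The main obstacle is precisely this valuation bound: it is the Puiseux-series analogue of an absolute root-separation bound, and whereas the pointwise ordering of the eigenvalues of a fixed rational matrix is supplied by \cite[Theorem 1]{BugeaudDujellaFangPejkovicSalvy2022}, here one must control a whole family of algebraic functions as $Y\to 0$, which forces the resultant-and-Newton-polygon detour above. A few corner cases --- a branch with $X_j=0$ (so $\mu_j\to 0$), a large ramification index, and the pairing of complex-conjugate branches --- require bookkeeping but do not affect the argument, only the size of the expansions. Termination of \cref{algorithm:roots} is then immediate, since Newton--Puiseux terminates, the resultants and their Newton polygons are computable, and the number of Puiseux coefficients required is bounded by the explicit $V$.
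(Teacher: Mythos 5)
Your argument proves that the partition $(q_i)_{i=1}^{\ell}$ is \emph{computable} by a pairwise-comparison procedure on the squared moduli $\mu_j(Y)=X_j(Y)\overline{X_j(Y)}$, but it never engages with the mechanism that \cref{algorithm:roots} actually uses, so it does not establish the stated proposition. The algorithm reads $q_1$ off as the \emph{index} $i_0$ of the square-free factor $P_{i_0}$ of $Q_Y\otimes Q_Y$ containing $\rho_1(Y)^2$; the crux of any correctness proof is therefore the identity ``the multiplicity of $\rho_1(Y)^2$ as a root of $Q_Y\otimes Q_Y$ equals $q_1$''. This needs the observation that if $|X_a(Y)|,|X_b(Y)|\le\rho_1(Y)$ and $X_a(Y)X_b(Y)=\rho_1(Y)^2$, then necessarily $|X_a|=|X_b|=\rho_1$ and $X_b=\overline{X_a}$, so the factors of $Q_Y\otimes Q_Y$ contributing the root $\rho_1^2$ are exactly the $q_1$ pairs $(X_a,\overline{X_a})$ with $|X_a|=\rho_1$, all distinct because $Q$ is square free. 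Your equivalence classes of the $\mu_j$ count something a priori different from this multiplicity (they range only over the ``diagonal'' products $X_j\overline{X_j}$, not over all products $X_aX_b$), and you never close that gap. You also do not touch step 6 of the algorithm: the recursion via $\mathcal{P}_m$ with $m=q_1+\dots+q_j+1$, whose correctness requires the analogous claim that the dominant positive real root of $\mathcal{P}_m\otimes\mathcal{P}_m$ has multiplicity exactly $q_{j+1}$ (the top $j$ modulus classes being forced in any dominant $m$-fold product, only the one extra root varies). These two points are the substance of the paper's proof, which follows \cite{GourdonSalvy1996}.

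On the positive side, your resultant-and-Newton-polygon valuation bound is sound and supplies something the paper only asserts: an explicit a priori truncation order after which the dominant positive real branch is certified, whereas the paper merely states that it ``is found using Puiseux expansions up to a finite order''. Grafted onto the multiplicity argument above, that bound would yield a complete and more effective proof of the proposition; by itself it proves the correctness of a different algorithm.
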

\begin{proof}
\cref{algorithm:roots} is an  extension of algorithm 5 in\cite{GourdonSalvy1996}. For a square-free univariate polynomial, this algorithm computes iteratively the number of roots for each 
modulus, starting from the largest modulus to the smallest.
The proof follows the same approach as in \cite{GourdonSalvy1996}. We sketch it for the computation of $q_1$.

For $Y$ decreasing to 0, the polynomial \( Q_Y \otimes Q_Y\in \mathbb{Q}[Y,X] \) has \( \rho_1(Y)^2 \) as its unique dominant root. On the other hand, the polynomials \( P_i \) are coprime. Thus, there exists a unique index \( i_0 \) such that \( (\rho_1(Y))^2\) is a root of \( P_{i_0}\in(\mathbb{Q}[Y])[X] \) and $q_1=i_0$.

To determine \( i_0 \), \cref{algorithm:General} computes Puiseux expansions at 0 for the roots $\beta_{i,j}(Y)$, $i=1\dots \ell$. The uniqueness of $\rho_1(Y)^2$ as dominant root ensures that $\beta_{i_0,j_0}(Y)$ is found using Puiseux expansions up to a finite order. This completes the proof.
The computations of \( Q_Y \otimes Q_Y \) and \( \mathcal{P}_{m} \) can be done efficiently \cite{GourdonSalvy1996}.  
\end{proof}

\begin{algorithm}
\caption{Number of Roots for each Modulus}
\label{algorithm:roots}
\KwIn{$Q_Y(X) = \prod_{i=1}^d (X - X_i(Y))$ square free polynomial $\in\mathbb{Q}[Y][X]$ where $X_i=X_i(0)\in\mathbb{C}$.}
\KwOut{$q_i$: the number of roots of modulus \( \rho_i(Y)\), as $Y$ decreases to 0.}
\begin{enumerate}
    \item Compute \( Q_Y \otimes Q_Y = \prod_{i,j} (X- X_i(Y) X_j(Y)) \).
    \item Compute the square-free decomposition of \( Q_Y \otimes Q_Y \) in the variable \( X \):
    \[
    Q_Y \otimes Q_Y = P_0(Y) P_1(X,Y) P_2(X,Y)^2 \dotsm P_{k}(X,Y)^k.
    \]
    \item For each \( i = 1, \dots, k \), compute Puiseux expansions at $Y=0$ for the roots \( \beta_{i,j}(Y) \) of the polynomial \( P_i(X,Y) \),
    \item Iteratively, increase the order of these expansions until finding $(i_0,j_0)$ such that $\beta_{i_0,j_0}(Y)$ is a positive real root having the greatest modulus as $Y$ decreases to 0;
    \item  Set $q_1 = i_0 $.
    \item Knowing \( q_i \) for \( i = 1, \dots, j \), let $m=q_1 + q_2 + \dots + q_j + 1$. Find \( q_{j+1} \) by repeating the above steps for the polynomial
    \[
    \mathcal{P}_{m}(Y,X) = \prod_{i_1 < i_2 < \dots < i_{m}} \left( X - X_{i_1}(Y) X_{i_2}(Y) \dots X_{i_{m}}(Y) \right).
    \]
\end{enumerate}
\end{algorithm}

\section{Sequence of cones}\label{sec:construction}
For a matrix  $A$  satisfying the contraction condition, Vandergraft’s construction of the cone is based on the eigenstructure of $A$. We first recall this construction and then show how to adapt it to define a sequence of cones  $(K_n)_n$ , each contracted by  $A(n)$ .
\subsection{Vandergraft's Construction}\label{subsec:vandergraft}
We specialize Vandergraft's construction to the case where \( A \) is the companion matrix of the polynomial  
\[
P(X) = (X-\lambda_1)(X-\lambda_2)\cdots (X-\lambda_{\mu})(X-\lambda_{\mu+1})^{m_{\mu+1}}\cdots (X-\lambda_k)^{m_k}.
\]  
 Here, \( \lambda_i \) denotes an eigenvalue of \( A \), and \( m_i \) is its multiplicity.  The eigenvalues are assumed to satisfy 
\[
\lambda_1 > |\lambda_2| \geq |\lambda_3| \geq \cdots \geq |\lambda_\mu| > |\lambda_{\mu+1}| \geq \cdots \geq |\lambda_k|,
\]  
and the first $\mu\geq 2$ of them are assumed to be simple.
\subsubsection{Basis Construction}  

Fix \( \varepsilon > 0 \) such that \( |\lambda_2| - |\lambda_{\mu+1}| > \varepsilon \). If $\mu=\deg(P)$, take \( \varepsilon = 0 \).  

Construct a basis \( (V_{i,j}) \) of \( \mathbb{R}^d \) satisfying the following properties:
\begin{align}\label{eq:basis}
     A V_{i,1} &= \lambda_i V_{i,1},&\quad& 1 \leq i \leq k , \\     A V_{i,j} &= \lambda_i V_{i,j} + \varepsilon V_{i,j-1}, &\quad& \mu+1 \leq i \leq k,\ 2 \leq j \leq m_i \nonumber.
\end{align}

In other words, if \( T \) is the matrix whose columns are the vectors\\ \( (V_{i,j})_{j=1,\dots, m_i,\ i=1,\dots, k} \), then
\begin{equation}\label{eq:Jordan}
    T^{-1}AT=
\begin{bmatrix}
    \lambda_1  & \cdots & 0 & 0 & \cdots & 0 \\
    \vdots  & \ddots & \vdots & \vdots & \ddots & \vdots \\
    0  & \cdots & \lambda_\mu & 0 & \cdots & 0 \\
    0  & \cdots & 0 & J_{\varepsilon, \mu+1} & \cdots & 0 \\
    \vdots  & \ddots & \vdots & \vdots & \ddots & 0 \\
    0  & \cdots & 0 & 0 & \cdots & J_{\varepsilon, k}
\end{bmatrix}
\end{equation}
with $J_{i,\varepsilon}$ the Jordan block
\[
\qquad
J_{i,\varepsilon} =
\begin{bmatrix}
    \lambda_i & \varepsilon & 0 & \cdots & 0 \\
    0 & \lambda_i & \varepsilon & \cdots & 0 \\
    0 & 0 & \lambda_i & \ddots & \vdots \\
    \vdots & \vdots & \ddots & \ddots & \varepsilon \\
    0 & 0 & \cdots & 0 & \lambda_i
\end{bmatrix}  \in \mathbb{C}^{m_i \times m_i}.
\]
Let $C$ be the set defined by \begin{align*}
C=\Bigl\{ \alpha=(\alpha_{1,1},\dots,\alpha_{j,m_k})\in \mathbb{C}^{d\times d} \Big|&~ \alpha_{1,1} \geq |\alpha_{i,j}|,\\ \alpha_{i,j} = \overline{\alpha_{p,q}}
\text{ if } V_{i,j} = \overline{V_{p,q}},
&~\alpha_{i,j} \in \mathbb{R} \text{ otherwise} \Bigr\}.
\end{align*}
Then, 
$$\label{eq:Cone}
    K=K(V_{i,j}) = \{ T\alpha | \alpha \in C\},
$$
 is a proper real cone contracted by $A$. This is proved along  with \cref{pro:distance} below. 
 
For a vector $\beta=(\beta_1,\dots,\beta_d)$ in $\mathbb{C}^{d\times d}$, we use the norm
$$\| \beta\| =\max_{i=1}^d |\beta_i|.$$ $\Delta$ denotes the metric induced by the norm $\|\cdot\|$.
\begin{proposition}\label{pro:distance}
Let \( S_1 \subset K \) be the compact set defined by
\[
S_1 := \{T\alpha| \alpha\in C \text{ and } \alpha_{1,1} = 1\}.
\]
Then, the distance between \( T^{-1}AS_1 \) and \( T^{-1}\partial K \), where \( \partial K \) denotes the boundary of \( K \), satisfies:
\[
\Delta(T^{-1}AS_1, T^{-1}\partial K) \geq \frac{\lambda_1 - |\lambda_2|}{2} > 0.
\]

\end{proposition}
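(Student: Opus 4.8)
The plan is to work entirely in the coordinates given by $T$, so that $A$ acts as the block-diagonal matrix $D := T^{-1}AT$ of \cref{eq:Jordan}, and the cone $K$ becomes $T^{-1}K = \{\alpha \mid \alpha \in C\}$ (identifying a point with its coordinate vector). Then $T^{-1}AS_1 = D \cdot \{\alpha \in C : \alpha_{1,1}=1\}$, and the claim becomes a purely combinatorial estimate on $D$ acting on $C$: I must show that every point of $D\alpha$, with $\alpha \in C$ and $\alpha_{1,1}=1$, lies at distance at least $(\lambda_1 - |\lambda_2|)/2$ from $\partial(T^{-1}K)$, where $T^{-1}K$ is exactly the set $C$ viewed inside $\mathbb{R}^d$ (via the real/conjugate-coordinate identification). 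Concretely, $C$ is cut out by the inequalities $\alpha_{1,1} \ge |\alpha_{i,j}|$ together with the linear subspace conditions ($\alpha_{i,j}$ real, or conjugate-paired), so its boundary consists of points where either $\alpha_{1,1} = |\alpha_{i,j}|$ for some $(i,j)\neq(1,1)$, or $\alpha_{1,1}=0$. Hence the distance from a point $\gamma \in C^\circ$ to $\partial C$ is governed by $\min_{(i,j)\neq(1,1)} \big(\gamma_{1,1} - |\gamma_{i,j}|\big)$ and by $\gamma_{1,1}$ itself; I would first record this elementary fact (a lower bound for $\Delta(\gamma, \partial C)$ in terms of that minimum, using that the $\ell^\infty$ distance to the region $\{|\alpha_{i,j}| \le \alpha_{1,1}\}$ is at least $\frac{1}{2}(\alpha_{1,1}-|\gamma_{i,j}|)$ after moving along the two coordinates $\alpha_{1,1},\alpha_{i,j}$).

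Next I would compute $D\alpha$ for $\alpha \in C$ with $\alpha_{1,1}=1$. The first coordinate becomes $\lambda_1 \alpha_{1,1} = \lambda_1$. For a simple block ($2 \le i \le \mu$) the coordinate is $\lambda_i \alpha_{i,1}$, of modulus $|\lambda_i|\,|\alpha_{i,1}| \le |\lambda_2|$ since $|\alpha_{i,1}|\le \alpha_{1,1}=1$. For a Jordan block $J_{i,\varepsilon}$ ($\mu+1 \le i \le k$), the $j$-th coordinate is $\lambda_i \alpha_{i,j} + \varepsilon \alpha_{i,j+1}$ (with the convention that the super-diagonal term is absent for $j=m_i$), whose modulus is at most $|\lambda_i| + \varepsilon \le |\lambda_{\mu+1}| + \varepsilon < |\lambda_2|$ by the choice of $\varepsilon$ (and if $\mu = \deg P$, there are no such blocks and $\varepsilon=0$). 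So in all cases every non-leading coordinate of $\gamma := D\alpha$ satisfies $|\gamma_{i,j}| \le |\lambda_2|$, while $\gamma_{1,1} = \lambda_1$. Therefore $\gamma_{1,1} - |\gamma_{i,j}| \ge \lambda_1 - |\lambda_2| > 0$ for every $(i,j)\neq(1,1)$, and also $\gamma_{1,1} = \lambda_1 > 0$; in particular $\gamma \in K^\circ$, which simultaneously establishes that $K$ is contracted by $A$. Feeding this uniform gap into the distance estimate from the first paragraph yields
\[
\Delta(\gamma, \partial C) \ge \frac{\lambda_1 - |\lambda_2|}{2},
\]
and since this holds for every $\gamma \in T^{-1}AS_1$, we get $\Delta(T^{-1}AS_1, T^{-1}\partial K) \ge \frac{\lambda_1-|\lambda_2|}{2} > 0$, as claimed.

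The one genuinely delicate point is the reduction of "distance to $\partial K$" to the coordinatewise gap: the conjugate-coordinate identification means $C$ lives in a real-linear subspace of $\mathbb{C}^d \cong \mathbb{R}^{2d}$ and one must check that perturbing within that subspace to violate some $\alpha_{1,1} \ge |\alpha_{i,j}|$ costs at least $\frac{1}{2}(\alpha_{1,1}-|\alpha_{i,j}|)$ in the $\|\cdot\|$-norm — moving $\alpha_{1,1}$ down and $|\alpha_{i,j}|$ up, each change must be at least half the gap, and in the conjugate-paired case the paired coordinate moves by the same modulus so the norm cost is unchanged. A secondary bookkeeping obstacle is being careful that $\partial K$ includes the "face" $\alpha_{1,1}=0$; but on $S_1$ we have $\alpha_{1,1}=1$ and after applying $D$ we have $\gamma_{1,1}=\lambda_1 \ge \lambda_1 - |\lambda_2|$... in fact one should note $\lambda_1 \ge \frac{\lambda_1-|\lambda_2|}{2}$ trivially, so that face does not bind. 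Everything else is the routine triangle-inequality computation sketched above.
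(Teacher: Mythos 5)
Your proposal is correct and follows essentially the same route as the paper's proof: pass to the $T$-coordinates where $A$ is the block-diagonal matrix of \cref{eq:Jordan}, show the image of $S_1$ has leading coordinate $\lambda_1$ and all other coordinates of modulus at most $|\lambda_2|$ (using $|\lambda_{\mu+1}|+\varepsilon\le|\lambda_2|$ for the Jordan blocks), characterize $\partial K$ by $\alpha_{1,1}=\max_{(i,j)\neq(1,1)}|\alpha_{i,j}|$, and split the gap $\lambda_1-|\lambda_2|$ between the decrease of the leading coordinate and the increase of some other coordinate to get the factor $\tfrac12$ in the $\ell^\infty$ metric. The "delicate point" you flag about the conjugate-paired subspace is handled in the paper exactly as you suggest, by bounding the distance below by the two relevant coordinate differences.
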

\begin{proof}\label{proof:2}
1. Action of \( A \):  
   Let \( W = T\beta \) be a vector in \( S_1 \). By definition, \( \beta_{1,1} = 1 \) and \( |\beta_{i,j}| \leq 1 \) for all \( i, j \). In view of \cref{eq:Jordan}, \( AW = T(T^{-1}AT)\beta=T\alpha \), where:
   \[
   \alpha_{1,1} = \lambda_1,
   \]
   and for \( i \neq 1 \):
   \[
   \alpha_{i,j} =
   \begin{cases}
   \beta_{i,j} \lambda_i, & j = m_i, \\
   \beta_{i,j} \lambda_i + \varepsilon \beta_{i,j+1}, & i\geq \mu+1,~j < m_i.
   \end{cases}
   \]
 2. Bounding \( |\alpha_{i,j}| \):
   Since \( |\beta_{i,j}| \leq 1 \), for \( i \neq 1 \) and $j=m_i$,
   \[
   |\alpha_{i,j}|\leq  |\lambda_i|\leq |\lambda_2|\]
For $i\geq \mu+1$ and $j<m_i$,
$$
     |\beta_{i,j}| \leq  |\lambda_i| + |\varepsilon \beta_{i,j+1}|\leq |\lambda_{\mu+1}|+\varepsilon \leq |\lambda_2|.
$$
Thus,
   \[
   (AS_1) \subset S_2 := \{T\alpha \mid \alpha_{1,1} = \lambda_1, |\alpha_{i,j}| \leq |\lambda_2| \}.
   \]
  By definition, $S_2$ is in the interior of $K$ and thus $K$ is contracted by$~A$.\\
3. Compactness of \( AS_1 \):  
   \( AS_1 \) is  a compact subset of $K^{\circ}$  because it is the image of a compact set \( S_1 \) under \( A \). \\
4. Characterization of \( \partial K \): 
   A vector \( T\alpha \) lies on the boundary of \( K \) if and only if \( \alpha_{1,1} = \displaystyle \max_{(i,j) \neq (1,1)} |\alpha_{i,j}| \), i.e.,
   \[
   \partial K =\Bigl\{ T\alpha \Big|~ \alpha \in C, \alpha_{1,1} = \max_{(i,j)\neq (1,1)} |\alpha_{i,j}|\Bigr\}.
   \]
5.\label{sec:5} Distance Calculation:
The inclusion in $S_2$ implies 
$$\Delta(T^{-1}AS_1, T^{-1}\partial K) \geq \Delta(T^{-1}S_2,T^{-1}\partial{K})=\Delta(C_1,C_2)$$ where \begin{align*}
    C_1&=\{\alpha \in C|~ \alpha_{1,1}=\lambda_1,|\alpha_{i,j}|\leq\lambda_2\},\\
 C_2&=\{\beta\in C|~ \max_{i\neq 1} |\beta_{i,j}|=\beta_{1,1}\}.
\end{align*}
By definition,
$$\Delta(C_1,C_2)=\min_{\alpha\in C_1,~\beta\in C_2}\Delta(\alpha,\beta).$$
Let $(i_0,j_0)$ be the index such that $|\beta_{i_0,j_0}|=|\beta_{1,1}|$; we have
$$\Delta(\alpha,\beta)\geq \max\Bigl\{|\lambda_1-\beta_{1,1}|,|\alpha_{i_0,j_0}-\beta_{i_0,j_0}|\Bigr\}.$$ 
For $|\beta_{1,1}|\leq \lambda_1$ or $|\beta_{1,1}|\leq|\lambda_2|$, by the inequality above we get 
$$\Delta(\alpha,\beta)\geq \lambda_1-|\lambda_2|.$$
Now, for $|\lambda_2|\leq|\beta_{1,1}|\leq \lambda_1$,
\begin{align*}
    |\lambda_1-\beta_{1,1}|&=\lambda_1-|\beta_{1,1}|\\
    |\alpha_{i_0,j_0}-\beta_{i_0,j_0}|&\geq |\beta_{i_0,j_0}|-|\alpha_{i_0,j_0}|\geq |\beta_{1,1}|-|\lambda_2|.
\end{align*}
Combining all these bounds, we get
$$\Delta(\alpha,\beta)\geq \max_{|\lambda_2|\leq |\beta_{1,1}|\leq \lambda_{1,1}}\Bigl\{\lambda_1-|\beta_{1,1}|,|\beta_{1,1}|-|\lambda_2|\Bigr\}\geq \frac{\lambda_1-|\lambda_2|}{2}.$$
Then, $\Delta(C_1,C_2)\geq \frac{\lambda_1-|\lambda_2|}{2}$. This completes the proof.
\end{proof}

\subsection{Construction of a sequence of contracted cones}
Let \( A(n) \in \mathbb{Q}(n)^{d \times d} \) be the companion matrix of the polynomial  
\[
P(n, X) = (X - \lambda_{1,n}) \cdots (X - \lambda_{v,n})(X - \lambda_{v+1,n})^{m_{v+1}} \cdots (X - \lambda_{k,n})^{m_k},
\]
where \( \lambda_{i,n} \in \mathbb{C} \) are the eigenvalues of \( A_n \), with multiplicities \( m_i \) (we can assume that \( m_i \) is constant for sufficiently large \( n \)).\\
$A(n)$ is assumed to satisfy the contraction condition, i.e. ,
\[
\lambda_{1,n} > |\lambda_{2,n}|\geq \dots \geq |\lambda_{v,n}| >|\lambda_{v+1,n}|\geq|\lambda_{v+2,n}| \dots \geq |\lambda_{k,n}|
.\]
The ordering is done as shown in \cref{sec:Separation}
\subsubsection*{Basis vectors}
First, we can fix $\varepsilon>0$, independent of $n$, such that $0<\varepsilon<|\lambda_{2,n}|-|\lambda_{v+1,n}|$ for $n$ large enough. For this choice of $\varepsilon$, the vectors $\{V_{i,j}(n)\}$ below are solutions of \cref{eq:basis}:
\begin{equation}\label{eq:solutions}\begin{aligned}
    V_{1,1}(n)&=d(1,\lambda_{1,n},\dots,\lambda_{1,n}^{d-1})\\
    V_{i,1}(n) &= (1, \lambda_{i,n}, \dots, \lambda_{i,n}^{d-1}),\quad 1\le i\le k .\\
\end{aligned}\end{equation}
For $\mu+1\le i\le k,~
    2\le j\le m_i$,

$$ V_{i,j}(n)= \varepsilon^{j-1} \left( \binom{\ell}{j-1} \lambda_{i,n}^
    {\ell - j + 1},  \ell = 0, \dots, d-1 \right).
$$
If $\lambda_{i,n}=0$, then take
\[V_{i,j}(n)=\varepsilon^{j-1}(0,\dots,0,1,0,\dots,0),\]
with~1 in the $j$th position.
\begin{proposition}\label{pro:positivity}
For $n$ large enough, the cone $K_n = K(V_{i,j}(n))$ is positive.
\end{proposition}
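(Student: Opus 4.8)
The plan is to show that for $n$ large enough, every generator of the cone $K_n$ — and hence every element, since $K_n$ is the positive hull of finitely many generators — lies in $\mathbb{R}_{\ge 0}^d$. Recall that $K_n = \{T(n)\alpha \mid \alpha \in C\}$, where the columns of $T(n)$ are the vectors $V_{i,j}(n)$, and $C$ is cut out by $\alpha_{1,1} \ge |\alpha_{i,j}|$ together with the conjugacy/reality constraints. A spanning set of $K_n$ over $\mathbb{R}_{\ge 0}$ is obtained by taking $\alpha_{1,1} = 1$ and, for each remaining index $(i,j)$, letting $\alpha_{i,j}$ range over the extreme admissible values ($\pm 1$ in the real case, or $e^{i\theta}$ paired with its conjugate in the complex-conjugate case). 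So it suffices to prove that $V_{1,1}(n) \pm V_{i,j}(n)$ (and the analogous real combinations coming from conjugate pairs) have all coordinates nonnegative for $n$ large.

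First I would isolate the dominant contribution. Since $A(n)$ satisfies the contraction condition, $\lambda_{1,n}$ is real, positive, and strictly dominant: $\lambda_{1,n} > |\lambda_{i,n}|$ for all $i \ge 2$, and by the Poincaré hypothesis $\lambda_{1,n} \to \lambda_1 > 0$ while the ratios $|\lambda_{i,n}|/\lambda_{1,n}$ stay bounded away from $1$ for large $n$ (this is exactly what the separation analysis of \cref{sec:Separation} guarantees). Now $V_{1,1}(n) = d\,(1, \lambda_{1,n}, \dots, \lambda_{1,n}^{d-1})$, whose $\ell$-th coordinate is $d\,\lambda_{1,n}^{\ell}$. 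For any other generator $V_{i,j}(n)$, the $\ell$-th coordinate is, up to the fixed constant $\varepsilon^{j-1}$ and a binomial factor $\binom{\ell}{j-1}$, a monomial in $\lambda_{i,n}$ of degree $\ell - j + 1 \le \ell$; in modulus it is bounded by $\binom{\ell}{j-1}\varepsilon^{j-1}|\lambda_{i,n}|^{\ell-j+1}$. I would then compare coordinate by coordinate: the $\ell$-th coordinate of $V_{1,1}(n) \pm (\text{real combination of } V_{i,j}(n))$ is at least
\[
d\,\lambda_{1,n}^{\ell} - \sum_{(i,j)} \binom{\ell}{j-1}\varepsilon^{j-1}|\lambda_{i,n}|^{\ell-j+1},
\]
and I want this to be $\ge 0$. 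Dividing through by $\lambda_{1,n}^{\ell}$, each term in the sum is $\binom{\ell}{j-1}\varepsilon^{j-1}\lambda_{1,n}^{-(j-1)}(|\lambda_{i,n}|/\lambda_{1,n})^{\ell - j + 1}$, which tends to $0$ as $n \to \infty$ because $|\lambda_{i,n}|/\lambda_{1,n}$ is bounded away from $1$ (for the nondominant $i$) or equals a fixed modulus ratio while the extra $\lambda_{1,n}^{-(j-1)}$ factor decays — in every case the bound goes to $0$ uniformly over the finitely many indices $(i,j)$ and the finitely many coordinates $\ell$. Hence for $n$ large the whole expression is positive; the factor $d$ in $V_{1,1}(n)$ is exactly the slack that Vandergraft's construction builds in so that the dominant generator dwarfs the others. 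The case $\lambda_{i,n}=0$ is immediate since then $V_{i,j}(n)$ is a fixed multiple of a coordinate vector, dominated by the corresponding coordinate $d\,\lambda_{1,n}^{\ell}$ of $V_{1,1}(n)$.

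Handling the complex eigenvalues requires slightly more care, and this is the step I expect to be the main obstacle. When $\lambda_{i,n}$ is complex, $V_{i,j}(n)$ is not a real vector; the real generators of $K_n$ arising from the pair $\{(i,j),(\overline{i},j)\}$ are combinations like $\operatorname{Re}(e^{i\theta} V_{i,j}(n))$ for $\theta \in [0,2\pi)$, whose $\ell$-th coordinate has modulus still bounded by $\binom{\ell}{j-1}\varepsilon^{j-1}|\lambda_{i,n}|^{\ell-j+1}$ — so the same modulus estimate applies and the argument goes through verbatim, but I would need to state carefully that the extreme rays of $C$ are captured by this family and that the supremum over $\theta$ of the relevant coordinate is controlled by that modulus bound. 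Once the coordinatewise inequality holds for all extreme generators simultaneously, positivity of $K_n$ for all large $n$ follows by taking nonnegative combinations, and I would record the threshold $N$ as the maximum of the finitely many indices beyond which each coordinatewise inequality holds.
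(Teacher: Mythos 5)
Your overall strategy---reduce to extreme generators of $K_n$ and check nonnegativity coordinate by coordinate against the dominant vector $V_{1,1}(n)$---is the same as the paper's, and your modulus bound $|(V_{i,j}(n))_\ell|\le\binom{\ell}{j-1}\varepsilon^{j-1}|\lambda_{i,n}|^{\ell-j+1}$ is the right quantity to control. But the step where you conclude is wrong: after dividing by $\lambda_{1,n}^{\ell}$, the terms of the sum do \emph{not} tend to $0$. For the other dominant eigenvalues $i=2,\dots,v$ (which have $j=1$, so no $\varepsilon$ or $\lambda_{1,n}^{-(j-1)}$ factor), the ratio $|\lambda_{i,n}|/\lambda_{1,n}$ tends to $|\lambda_i|/\lambda_1=1$ by hypothesis, so $(|\lambda_{i,n}|/\lambda_{1,n})^{\ell}\to 1$. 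Even for the non-dominant $i$, a ratio bounded away from $1$ raised to the bounded power $\ell-j+1\le d$ tends to a fixed positive number, not to $0$; and $\lambda_{1,n}^{-(j-1)}$ does not ``decay'' since $\lambda_{1,n}\to\lambda_1>0$. So the normalized sum does not vanish, and your asymptotic argument does not yield positivity.

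What actually closes the argument---and what the paper does---is a counting bound rather than a limit. Group the sum by eigenvalue: for each fixed $i\neq 1$, the binomial theorem gives
\[
\sum_{j=1}^{m_i}\varepsilon^{j-1}\binom{\ell}{j-1}|\lambda_{i,n}|^{\ell-j+1}\;\le\;\bigl(|\lambda_{i,n}|+\varepsilon\bigr)^{\ell}\;<\;\lambda_{1,n}^{\ell},
\]
where the last inequality uses the choice $\varepsilon<|\lambda_{2,n}|-|\lambda_{v+1,n}|$ (so $|\lambda_{i,n}|+\varepsilon<|\lambda_{2,n}|<\lambda_{1,n}$ for the multiple eigenvalues, and trivially $|\lambda_{i,n}|<\lambda_{1,n}$ for the simple ones). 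Since $\lambda_{1,n}$ is simple there are at most $d-1$ distinct indices $i\neq 1$, so the total perturbation at coordinate $\ell$ is strictly less than $(d-1)\lambda_{1,n}^{\ell}$, while $V_{1,1}(n)$ contributes $d\,\lambda_{1,n}^{\ell}$. This is precisely where the factor $d$ is consumed; you mention it as ``slack'' but your proof as written never uses it, and without the counting step the conclusion does not follow.
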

\begin{proof}
For simplicity, we denote $V_{i,j} = V_{i,j}(n)$ in this proof. Let $W$ be a vector in $K(V_{i,j})$. Up to scaling, we can assume that
$$ W = V_{1,1} + \sum_{i \neq 1} \alpha_{i,j} V_{i,j}, \quad |\alpha_{i,j}| \leq 1. $$

For a vector \( x \in \mathbb{C}^d \), we denote by \( (x)_l \) the \((l+1)\)-th coordinate of the vector. For \( \ell = 0, \dots, d-1 \), we have
\begin{align*}
    (W)_\ell &= (V_{1,1})_\ell + \sum_{i \neq 1} \sum_{j=1}^{m_i} \alpha_{i,j} (V_{i,j})_\ell \\
    &\geq d\lambda_1^\ell - \sum_{i \neq 1} \sum_{j=1}^{m_i} |\alpha_{i,j} (V_{i,j})_\ell| \\
    &\geq d\lambda_1^\ell - \sum_{i \neq 1} \sum_{j=1}^{m_i} |(V_{i,j})_\ell|.
\end{align*}
Now, for any fixed \( i \neq 1 \), we prove that
$$ \sum_{j=1}^{m_i} |(V_{i,j})_\ell| \leq \lambda_1^\ell.$$
By definition of the vectors \( V_{i,j} \), we have
\begin{align*}
    \sum_{j=1}^{m_i} |(V_{i,j})_\ell| &= \sum_{j=1}^{m_i} \varepsilon^{j-1} \binom{\ell}{j-1} |\lambda_i|^{\ell-j+1} \\
    &\leq (|\lambda_i| + \varepsilon)^\ell < |\lambda_1|^\ell.
\end{align*}

Thus, we have shown that \( (W)_\ell >0 \) for all \( \ell = 0, \dots, d-1 \), which implies that \( W \in\mathbb{R}^d_{>0} \). Therefore, the cone \( K_n \) is positive.
\end{proof}
\section{Several Simple Dominant Eigenvalues}\label{sec:theo}
For recurrences having several dominant eigenvalues, all simple, we now give  asymptotic condition on the eigenvalues of \( A(n) \), ensuring the existence of a sequence of contracted cones $(K_n)_n$ such that $A(n)K_n\subset K_{n+1}$ for $n$ large enough.
\begin{theorem}\label{theo:cond}
Let \( A(n) \in \mathbb{Q}(n)^{d \times d} \) be the companion matrix of the polynomial  
\[
P(n, X) = (X - \lambda_{1,n}) \cdots (X - \lambda_{v,n})(X - \lambda_{v+1,n})^{m_{v+1}} \cdots (X - \lambda_{k,n})^{m_k},
\]
where \( \lambda_{i,n} \in \mathbb{C} \) are the eigenvalues of \(A_n\), with multiplicities \( m_i \). 

Assume that the following conditions hold:
\begin{enumerate}
    \item \label{cond:1} Contraction condition: \( \lambda_{1,n} \geq |\lambda_{i,n}| \) for all \( i \neq 1 \) and $n$ sufficiently large.
    \item \label{cond:2} The eigenvalues \( \lambda_{i,n} \) converge as \( n \to \infty \) to distinct limits \( \lambda_i \in \mathbb{C} \), satisfying:
    \[
    \lambda_1 = |\lambda_2| = \cdots = |\lambda_v| > |\lambda_j| \quad \text{for } j > v.
    \]
    \item \label{cond:3} \(  \underset{i=1,\dots,k}{\max}   |\lambda_{i,n} - \lambda_{i,n+1}| = o\left(\lambda_{1,n} - \underset{j=2,\dots,v}{\max}   |\lambda_{j,n}|\right),~ n \to \infty.
    \)
\end{enumerate}
Then, there exists a sequence of proper positive cones \( K_n \subset \mathbb{R}^d \) such that 
\( A_n K_n \subseteq K_{n+1} \), for sufficiently large \( n \).
\end{theorem}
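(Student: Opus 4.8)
The plan is to build the cones $K_n = K(V_{i,j}(n))$ exactly as in Section~\ref{sec:construction}, and then to estimate the ``gap'' between the image $A(n)K_n$ and the boundary $\partial K_{n+1}$, showing that this gap is positive for large $n$. Two ingredients are already in place: by \cref{pro:distance} applied to $A(n)$ (with the $n$-dependent basis $T_n$, eigenvalues $\lambda_{i,n}$, and the fixed $\varepsilon$), the set $A(n)S_1^{(n)}$ — the image of the normalised slice of $K_n$ — sits well inside $K_n^\circ$, at $\Delta$-distance at least $(\lambda_{1,n}-\max_{j\le v}|\lambda_{j,n}|)/2$ from $\partial K_n$ \emph{after transport by} $T_n^{-1}$; and by \cref{pro:positivity} each $K_n$ is positive. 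So the only thing to control is the discrepancy between $K_n$ and $K_{n+1}$, which is governed by $\|V_{i,j}(n)-V_{i,j}(n+1)\|$, hence by $\max_i|\lambda_{i,n}-\lambda_{i,n+1}|$ via the explicit formulas in \cref{eq:solutions}.

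Concretely, I would proceed as follows. First, fix $\varepsilon>0$ with $0<\varepsilon<|\lambda_{2,n}|-|\lambda_{v+1,n}|$ for large $n$, which is possible by condition~\ref{cond:2}, and form $T_n$, the change-of-basis matrix whose columns are the $V_{i,j}(n)$. Step~1: observe that $A(n)K_n\subset K_{n+1}$ is equivalent to $T_{n+1}^{-1}A(n)T_n\,C\subset C^\circ$, where $C$ is the fixed reference cone (in $\alpha$-coordinates) from Section~\ref{subsec:vandergraft}. Step~2: write $T_{n+1}^{-1}A(n)T_n = \big(T_{n+1}^{-1}A(n)T_n - T_n^{-1}A(n)T_n\big) + T_n^{-1}A(n)T_n$; the second summand is the Jordan-type matrix of \cref{eq:Jordan} with entries $\lambda_{i,n}$ and $\varepsilon$, and by the computation in the proof of \cref{pro:distance} it maps the slice $\{\alpha_{1,1}=1\}\cap C$ into $\{|\alpha_{i,j}|\le\max_{j\le v}|\lambda_{j,n}|,\ \alpha_{1,1}=\lambda_{1,n}\}$, i.e. into $C^\circ$ at distance $\ge(\lambda_{1,n}-\max_{j\le v}|\lambda_{j,n}|)/2$ from $\partial C$. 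Step~3: bound the perturbation term. Using the explicit coordinates of $V_{i,j}(n)$, each entry of $T_n$ is a polynomial in the $\lambda_{i,n}$ of bounded degree with bounded coefficients (as the $\lambda_{i,n}$ converge), so $\|T_{n+1}-T_n\| = O(\max_i|\lambda_{i,n}-\lambda_{i,n+1}|)$; since the $\lambda_i$ are distinct, $\det T_n$ stays bounded away from $0$ (it is a Vandermonde-like determinant with a nonzero limit), so $\|T_{n+1}^{-1}-T_n^{-1}\|=O(\max_i|\lambda_{i,n}-\lambda_{i,n+1}|)$ as well, and likewise $\|A(n)\|$ is bounded. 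Hence the operator norm of the perturbation $T_{n+1}^{-1}A(n)T_n-T_n^{-1}A(n)T_n$ is $O\!\big(\max_i|\lambda_{i,n}-\lambda_{i,n+1}|\big)$. Step~4: combine — condition~\ref{cond:3} says exactly that this perturbation is $o\!\big(\lambda_{1,n}-\max_{j\le v}|\lambda_{j,n}|\big)$, so for $n$ large the perturbed image still lies in $C^\circ$; rescaling by positivity of $\alpha_{1,1}$ and conicity extends this from the slice to all of $C$, giving $T_{n+1}^{-1}A(n)T_n\,C\subset C^\circ\subset C$, i.e. $A(n)K_n\subset K_{n+1}$. Positivity of the $K_n$ is \cref{pro:positivity}, and properness is inherited from Vandergraft's construction, so the conclusion follows.

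The main obstacle I anticipate is Step~3, specifically making the estimate $\|T_{n+1}^{-1}-T_n^{-1}\| = O(\max_i|\lambda_{i,n}-\lambda_{i,n+1}|)$ fully rigorous and \emph{uniform} in $n$: one must argue that $\det T_n$ does not degenerate, which relies on the eigenvalues $\lambda_{i,n}$ remaining distinct and separated uniformly — this is where condition~\ref{cond:2} (distinct limits) is essential, since it forces $|\lambda_{i,n}-\lambda_{j,n}|$ to be bounded below for $i\ne j$ and large $n$. A secondary subtlety is that for complex-conjugate eigenvalue pairs the basis vectors $V_{i,j}(n)$ are complex and the cone $C$ is cut out by the reality constraints $\alpha_{i,j}=\overline{\alpha_{p,q}}$; one must check that the perturbation term respects (or approximately respects, up to the same $o(\cdot)$ order) these constraints, which it does because $A(n)$ is real and the conjugation pattern of the $V_{i,j}(n)$ is stable for large $n$. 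Once the uniform conditioning of $T_n$ is established, the rest is a routine triangle-inequality argument feeding condition~\ref{cond:3} into the gap estimate of \cref{pro:distance}.
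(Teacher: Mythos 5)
Your proposal is correct and follows essentially the same route as the paper: construct $K_n=K(V_{i,j}(n))$ from the $n$-dependent Vandergraft basis, use the gap estimate of \cref{pro:distance}, and bound the perturbation by $\|T_{n+1}-T_n\|=O(\max_i|\lambda_{i,n}-\lambda_{i,n+1}|)$ together with the uniform invertibility of $T_n$ guaranteed by condition~(2), so that condition~(3) closes the argument. The only (cosmetic) difference is that you perturb $T_n^{-1}A(n)T_n$ and measure the gap at index $n$, whereas the paper compares $A(n)W_n$ with $A(n+1)W_{n+1}$ and therefore needs the extra equivalence $\lambda_{1,n+1}-|\lambda_{2,n+1}|\sim\lambda_{1,n}-|\lambda_{2,n}|$, a step your arrangement avoids.
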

\begin{proof}
When \( v = 1 \), the limit matrix \( A \) of \( A(n) \) satisfies the contraction condition. In this case, we can take \( K_n = K \), where \( K \) is the cone contracted by \( A \), as shown in \cite{ibrahim2024positivityproofslinearrecurrences}.\\

We know focus on the case $v>1$. For a fixed \( n \), \( A(n) \) satisfies  condition (\ref{cond:1}). By \cref{thm:Vandergraft}, \( A(n) \) contracts a proper cone \( K_n \). We take \( K_n \) specifically as \( K_n = K(V_{i,j}(n)) \subset \mathbb{R}^d_{>0} \), defined in \cref{eq:Cone}.

For $n\in\mathbb{N}$, \( T_n \) denotes the matrix having for columns the vectors  $V_{i,j}(n)$.
In the following, we prove that \( A(n) K_n \subset K_{n+1} \).

Since \( K_n \) is a cone, by linearity, it suffices to show that the inclusion holds for the vectors in \( S_1(K_n) \), where
$$
S_1(K_n) = \{ W = T_n \alpha \mid \alpha \in C,~\alpha_{1,1} = 1 \}.
$$

Let \( W_n = T_n \alpha \) be a vector in \( S_1(K_n) \). Then \( W_{n+1} = T_{n+1} \alpha \) is a vector in \( S_1(K_{n+1}) \), and its image \( A(n+1) W_{n+1} \) lies in the interior of \( K_{n+1} \).

We introduce the norm  $$\|x\|_n = \|T_n^{-1} x\|,$$ and $\Delta_n$ the distance induced by $\|.\|_n$. As shown by \cref{pro:distance},
$$
\Delta_{n+1}(A(n+1) W_{n+1}, \partial K_{n+1}) \geq \frac{ \lambda_{1,n+1} - |\lambda_{2,n+1}|}{2}.
$$
To prove the inclusion property, it is sufficient to show that
$$
\| A(n) W_n - A(n+1) W_{n+1} \|_{n+1} < \frac{\lambda_{1,n+1} - |\lambda_{2,n+1}|}{2}.
$$
This ensures that \( A(n) W_n \) is contained in \( K_{n+1} \).

By definition, we have
\begin{align*}
    \| A(n) W_n - A(n+1) W_{n+1} \|_{n+1} &= \| (A(n) T_n - A(n+1) T_{n+1}) \alpha \|_{n+1} \\
    &= \| T_{n+1}^{-1} (A(n) T_n - A(n+1) T_{n+1}) \alpha \| \\
    &\leq \| T_{n+1}^{-1} (A(n) T_n - A(n+1) T_{n+1}) \|
\end{align*}
since $\|\alpha\|=1$. Next, we use the following relation for \( T_{n+1}^{-1} \),
\begin{align*}
   T_{n+1}^{-1}-T_n^{-1}= T_{n+1}^{-1}(T_n-T_{n+1})T_{n}^{-1}.
\end{align*}
This leads to the inequality,
\begin{align*} \| A(n) W_n - A(n+1)& W_{n+1} \|_{n+1} \leq \| T_n^{-1} A(n) T_n - T_{n+1}^{-1} A(n+1) T_{n+1} \| \\
    &\quad + \| T_{n+1}^{-1}(T_n-T_{n+1})T_{n}^{-1} A(n) T_n \|.
\end{align*}
Since
$$ 
T_n^{-1} A(n) T_n = \begin{bmatrix}
    \lambda_{1,n} & 0 & \cdots & 0 \\
    0 & \lambda_{2,n} & \cdots & 0 \\
    \vdots & \vdots & \ddots & \vdots \\
    0 & 0 & \cdots & J_{k_n,\varepsilon}
\end{bmatrix},
$$
we obtain first the following bound as $n\to \infty$,
\begin{equation}\label{eq:soustraction}
    \| T_n^{-1} A(n) T_n - T_{n+1}^{-1} A(n+1) T_{n+1} \| = O\left( \max_{i=1,\dots,k} |\lambda_{i,n} - \lambda_{i,n+1}| \right).
\end{equation}
Next, let \( (t_{i,j}) \) denote the entries of \( T_{n+1} - T_n \). Then, by \cref{eq:basis},
$$
t_{i,j} = c_{i,j} (\lambda_{i,n}^{j-1} - \lambda_{i,n+1}^{j-1}),\quad n\to \infty
$$
where \( c_{i,j} \) are constants independent of \( n \). Therefore, the norm of the difference \( T_{n+1} - T_n \) can be bounded as
$$
\| T_{n+1} - T_n \| = O\left( \max_{i=1,\dots,k} |\lambda_{i,n} - \lambda_{i,n+1}| \right),~n \to \infty.$$
Due to the dominance of $\lambda_{1,n}$ and the condition on $\varepsilon$,  
\begin{equation}\label{eq:lambda1}
    \|T_{n}^{-1}A(n)T_{n}\|=\lambda_{1,n}\to \lambda_1,\quad n\to \infty.
\end{equation}
Since \( T_n \to T \) (invertible by condition (\ref{cond:2})), \( \| T_{n+1}^{-1} \| \) is bounded as $n$ tends to infinity.
Combining this with the bound in \cref{eq:lambda1},
\begin{equation}\label{eq:sum}
    \|T_{n+1}^{-1}(T_n-T_{n+1})T_n^{-1}A(n)T_n\|= O\left(\max_{i=1,\dots,k} |\lambda_{i,n} - \lambda_{i,n+1}| \right).
\end{equation}
Finally, by \cref{eq:sum} and \cref{eq:soustraction}, as $n\to\infty$
$$
\| A(n) W_n - A(n+1) W_{n+1} \|_{n+1} = O\left( \max_{i=1,\dots,k} |\lambda_{i,n} - \lambda_{i,n+1}| \right),
$$
which is smaller than $\frac{\lambda_{1,n+1}-|\lambda_{2,n+1}|}{2}$ asymptotically. Indeed,  condition (\ref{cond:3}), asserts $$\underset{i=1,\dots,k}{\max} |\lambda_{i,n}-\lambda_{i,n+1}|=o(\lambda_{1,n}-|\lambda_{2,n}|),\quad n\to \infty.$$ 
Under this assumption, we prove that 
\begin{equation}
    \label{eq:equivalence}\lambda_{1,n+1}-|\lambda_{2,n+1}|\sim \lambda_{1,n}-|\lambda_{2,n}|,\quad n\to\infty.
\end{equation}
Consequently, as $n\to\infty$ 
$$\underset{i=1,\dots,k}{\max} |\lambda_{i,n}-\lambda_{i,n+1}|=o(\lambda_{1,n+1}-|\lambda_{2,n+1}|),$$ thus  \( A_n K_n \subseteq K_{n+1} \) for \( n \) large enough. 

In fact, 
\begin{align*}
     |\lambda_{1,n+1}-\lambda_{1,n}|&=O\Bigl(\underset{i=1,\dots,k}{\max} |\lambda_{i,n}-\lambda_{i,n+1}|\Bigr)\\
     ||\lambda_{2,n}|-|\lambda_{2,n+1}||&\leq |\lambda_{2,n}-\lambda_{2,n+1}|=O\Bigl(\underset{i=1,\dots,k}{\max} |\lambda_{i,n}-\lambda_{i,n+1}|\Bigr).
\end{align*}
Then,
$$\lambda_{1,n+1}-|\lambda_{2,n+1}|=\lambda_{1,n}-|\lambda_{2,n}|+O(\underset{i=1,\dots,k}{\max} |\lambda_{i,n}-\lambda_{i,n+1}|).$$
Thus, by condition (\ref{cond:3}) we obtain the equivalence in \cref{eq:equivalence}.
\end{proof}

\begin{proposition}
    
\label{cor:decidability}
Let \( (u_n)_n \) be a P-finite sequence satisfying the recurrence in \cref{rec}, of Poincaré type. Assume that:  
\begin{itemize}
    \item The recurrence has several dominant eigenvalues, all simple.  
    \item One of these eigenvalues, denoted \( \lambda_1 \), is real and positive.  
    \item The associated operator \( A(n) \) satisfies the conditions of Theorem \ref{theo:cond}.  
    \item \( V_{1} \) is a positive eigenvector of the limit matrix \( A \) associated with \( \lambda_1 \)
    and 
    \[
    \lim_{n \to \infty} \frac{U_n}{\|U_n\|} = V_{1}.
    \]  
\end{itemize}  
Then, the positivity of \( (u_n)_n \) is decidable.  
\end{proposition}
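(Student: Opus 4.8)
The plan is to prove that \cref{algorithm:General} is correct and halts on such an input, so that positivity is decided by running it. Throughout we keep the paper's standing hypothesis $0\notin p_d(\mathbb N)$, which makes $A(n)$ and the vectors $U_n=\trsp{(u_n,\dots,u_{n+d-1})}$ well defined with $A(n)\to A$. For Step~1, the third hypothesis of the statement says that $A(n)$ satisfies the contraction condition for $n$ large; ordering its eigenvalues $\lambda_{i,n}$ — the roots of the rational polynomial $P(n,X)$ — by modulus with the procedure of \cref{sec:Separation}, the construction of \cref{eq:solutions} produces proper positive cones $K_n=K(V_{i,j}(n))\subset\mathbb R_{\ge 0}^d$ contracted by $A(n)$, by \cref{pro:distance} and \cref{pro:positivity}. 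For Step~2, \cref{theo:cond} already yields $A(n)K_n\subseteq K_{n+1}$ for all large $n$; to turn this into an explicit threshold $N$ I would track the implied constants in the $O(\cdot)$ and $o(\cdot)$ estimates of its proof — all of which compare algebraic functions of $n$ ($\|T_{n+1}-T_n\|$, $\|T_{n+1}^{-1}\|$, $\lambda_{1,n+1}-|\lambda_{2,n+1}|$, and so on) whose Puiseux expansions at infinity are computable — so that the inequality $\|A(n)W_n-A(n+1)W_{n+1}\|_{n+1}<\tfrac12(\lambda_{1,n+1}-|\lambda_{2,n+1}|)$ can be certified for every $n\ge N$.

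Step~3 is where the asymptotic hypothesis $U_n/\|U_n\|\to V_1$ is used. The vector $V_{1,1}(n)=d\,(1,\lambda_{1,n},\dots,\lambda_{1,n}^{d-1})$ corresponds in the basis $(V_{i,j}(n))$ to the $\alpha$ with $\alpha_{1,1}=1$ and all other entries $0$, which is interior to $C$, so $V_{1,1}(n)\in K_n^\circ$; moreover its direction converges to that of $(1,\lambda_1,\dots,\lambda_1^{d-1})$, which spans the $\lambda_1$-eigenspace of $A$ and is hence collinear with $V_1$. Since $T_n\to T$ with $T$ invertible, the description of $\partial K_n$ used in the proof of \cref{pro:distance}, read in the norm $\|\cdot\|_n=\|T_n^{-1}\cdot\|$, shows that, after normalization, this direction stays at ambient distance from $\partial K_n$ bounded below by some $\delta>0$ independent of $n$. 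As both $U_n/\|U_n\|$ and $V_{1,1}(n)/\|V_{1,1}(n)\|$ converge to $V_1/\|V_1\|$, for $n$ large the former lies within distance $\delta$ of the latter, and therefore $U_n\in K_n$. Hence the search over $n_0=N,N+1,\dots$ for an index with $U_{n_0}\in K_{n_0}$ terminates, each test being decidable since it reduces to checking the sign and modulus inequalities defining $C$ on the algebraic vector $T_{n_0}^{-1}U_{n_0}$. Once $n_0$ is found, an immediate induction using Step~2 gives $U_n\in K_n\subset\mathbb R_{\ge0}^d$ for all $n\ge n_0$, so $u_n\ge 0$ for $n\ge n_0$; consequently $(u_n)_n$ is positive if and only if, in addition, the finitely many rationals $u_0,\dots,u_{n_0-1}$ are all non-negative, which is checked directly.

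The main obstacle is the termination of Step~3, that is, establishing that $U_n$ genuinely enters $K_n$: this rests on the uniform-in-$n$ lower bound $\delta$ on the ``interior margin'' of $K_n$ around the $\lambda_{1,n}$-direction, and producing it requires controlling $\|T_n^{-1}\|$ and transporting the distance estimate of \cref{pro:distance} through the varying norms $\|\cdot\|_n$ — the one point where having several dominant eigenvalues makes the argument more delicate than in the unique-dominant-eigenvalue case of \cref{thm:decidability}. A secondary, essentially routine, point is making the threshold $N$ of Step~2 explicit, which only needs effective asymptotics for the eigenvalue branches $\lambda_{i,n}$.
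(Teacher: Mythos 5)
Your proposal is correct and follows essentially the same route as the paper: it establishes termination of \cref{algorithm:General} by taking $K_n=K(V_{i,j}(n))$, invoking \cref{theo:cond} for the inclusions $A(n)K_n\subset K_{n+1}$, and showing that $U_n$ eventually enters $K_n$ by comparing $U_n/\|U_n\|$ with the normalized dominant eigenvector $V_{1,1}(n)$ and using a uniform-in-$n$ interior margin of the cone around that direction (the paper computes this margin as exactly $\tfrac12$ in the norm $\|\cdot\|_n$ and uses the boundedness of $\|T_n^{-1}\|$, which is the same mechanism as your $\delta$). The only additions are your remarks on making the threshold $N$ effective, which the paper defers to its references.
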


\begin{proof}
We prove the termination of \cref{algorithm:General} under these conditions.

Take \( K_n = K(V_{i,j}(n)) \subset \mathbb{R}^d_{>0} \). By \cref{theo:cond}, there exists \( N \in \mathbb{N} \) such that  
   \[
   A(n) K_n \subset K_{n+1}, \quad \forall n \geq N.
   \]
   Thus, termination depends on verifying the initial conditions, i.e., proving that \( U_n \in K_n \) for sufficiently large \( n \).
   
 First, the vector \( V_{1} \) and  \( \lim\limits_{n\to\infty} V_{1,1}(n) \) are both positive eigenvectors of \( A \) associated with \( \lambda_1 \). Thus, there exists a constant \( \beta > 0 \) such that  
  $$\lim\limits_{n\to\infty} V_{1,1}(n)=\beta~ V_{1}.$$
Since $V_{1,1}(n)$ is in the interior of $K_n$,  we compute its distance to the boundary \( \partial K_n \) that remains strictly positive:
   $$\Delta_n(V_{1,n}, \partial K_n) = \Delta(T_n^{-1} V_{1,n}, T_n^{-1} \partial K_n)=\Delta( (1,0,\dots,0), C_2),$$
    where $C_2 = T_n^{-1}\partial K_n$ is defined in \cref{proof:2}.
  Using the same approach in the proof of \cref{pro:distance}, we get $$\Delta((1,0,\dots,0),C_2)=\frac{1}{2}.$$
  
On the other hand, let $W_n=\beta~U_n/\|U_n\|$. Then,
   \[
   \|V_{1,1}(n) - W_n\|_n \leq \|T_n^{-1}\| (\|V_{1,1}(n) - \beta V_1\| + \|\beta V_1 - W_n\|).
   \]
   Since \( \|T_n^{-1}\| \) is bounded, \( \|V_{1,1}(n) - W_n\|_n \) becomes arbitrarily small as $n\to \infty$. Thus, for $n$ sufficiently large, $U_n$ remains in the cone $K_n$.
   
Returning to \cref{algorithm:General}, once we find $n_0\in\mathbb{N}$ such that $U_{n_0}\in K_{n_0}$, by induction \( U_n \) remains in \( K_n\subset \mathbb{R}^d_{>0}\) for all \( n \geq n_0 \).  Therefore, deciding positivity reduces to checking the sign of the initial values \( U_0, \dots, U_{n_0-1} \).
\end{proof}

\section{Examples}\label{sec:Examples}

We now present examples for which the positivity could not be proven using our previous algorithm.

\begin{itemize}
    \item For \( x \in (0, 1) \), consider the inequality:
\end{itemize}
\[
f(x) = {}_2F_1\left(\frac{1}{2}, \frac{1}{2}; 1; x^2\right) - (1 + x) \, {}_2F_1\left(\frac{1}{2}, \frac{1}{2}; 1; x\right) \geq 0.
\]Let \( (a_n)_{n} \) be the sequence of coefficients in the Taylor expansion of \( f(x) \), i.e., \( f(x) = \sum_{n=0}^{\infty} a_n x^n \). By closure properties, \( (a_n)_{n} \) is a P-finite sequence and satisfies a recurrence of order 6. 
If both sequences \( (u_n)_{n} \) and \( (w_n)_{n} \), defined by \( w_n = a_{2n+1} \) and \( u_n = a_{2n} \), are positive, then the inequality is proven. The sequence \( (w_n)_{n} \) satisfies the recurrence:
\small
\begin{align*}
    64(32n^2 + 24n + 5) & (2n + 3)^2(n + 1)^2 w_{n+1} = \\
    & (32n^2 + 88n + 61)(4n + 3)^2(4n + 1)^2 w_n.
\end{align*}
\normalsize
Thus, positivity is trivial.

On the other hand, \( (u_n)_{n} \) is the solution of a second-order recurrence relation with a double eigenvalue \( \lambda_1 = 1 \). However, the matrices $A(n)$ satisfy the contraction condition. Applying \cref{algorithm:General}, the vectors \( U(n) = (u_n, u_{n+1}) \in K_n \subset \mathbb{R}^2_{>0} \) for all \( n \geq 2 \). Thus, \( (u_n)_{n} \) is positive. 
\begin{itemize}
    \item Let $P_k$ denote the $k$-th Legendre polynomial. The following example is Turàn's inequality for $x=\frac{1}{2}$,
\end{itemize}
$$(P_n^2-P_{n-1}P_{n+1})(\frac{1}{2})\geq 0,\quad n\in\mathbb{N}.$$
The associated sequence $(u_n)_n$ satisfies a recurrence of order $3$ and its characteristic polynomial is $X^3-1$. However, as $n\to\infty$ 
$$\lambda_{1,n}\sim 1-\frac{1}{n},~\lambda_{2,n}\sim \frac{-1+i\sqrt{3}}{2}+ \frac{3(1-i\sqrt3)}{2n},~\lambda_{3,n}=\overline{\lambda_{2,n}}.$$
Thus, the conditions of \cref{cor:decidability} are satisfied and by \cref{algorithm:General} the vectors $U_n\in K_n\subset\mathbb{R}^3_{>0}$ for all $n\geq 8$.

\begin{remark}
    The above inequality holds for all  $-1 \leq x \leq 1$ . Using \cref{algorithm:General} to prove it, we observe that the inclusion index  $N \to \infty$  as  $x$ approaches the boundary points  $-1$  and  $1$.
\end{remark}

\section{Approximations by Asymptotic Truncation}\label{sec:practice}
In practice, computing the cone \( K_n \) in \cref{algorithm:General} is simplified by approximating the eigenvalues \( \lambda_{i,n} \) through asymptotic expansions. Specifically, we truncate the expansions of $\lambda_{i,n}$ as $n\to\infty$. The order of truncation is iteratively increased if necessary. 

For instance, in \cref{sec:Example}, the cone \( K_n \) was constructed by taking an asymptotic truncation of \( \lambda_{i,n} \) up to order \( 1 \):
$$
    \lambda_{1,n} \sim 1+ \frac{-2+\sqrt{2}}{n},~ 
    \lambda_{2,n} \sim 1- \frac{2+\sqrt{2}}{n}, \quad n\to\infty.$$

With these approximations, we construct the associated vectors \( V_{i,j}(n) \), and then the cone \( K_n \) is given by
\[
K_n = K({V_{i,j}(n)}), \quad n > 0.
\]

\begin{remark}
The computations for the steps following the construction of the sequence of cones can be carried out as outlined in \cite[section 6 ]{ibrahim2024positivityproofslinearrecurrences}.
\end{remark}
\section{Conclusion}
We present a method for proving positivity of P-finite sequences of Poincaré type under the assumption that the associated recurrence operator satisfies the contraction condition. This extends the approach in
\cite{ibrahim2024positivityproofslinearrecurrences} to a larger class of recurrences, where the dominant eigenvalue is not necessarily unique or simple.

For recurrences with multiple simple dominant eigenvalues of arbitrary order, we give an asymptotic conditions on the eigenvalues of the recurrence operator that ensure the existence of a sequence of cones,  each mapped into the next by the recurrence operator. Under an assumption on the asymptotic behavior of the sequence, this allows us to decide positivity within this class of recurrences. This assumption depends on the choice of initial conditions and is analogous to the genericity assumption in  \cite{ibrahim2024positivityproofslinearrecurrences,KauersPillwein2010a}.  
Moreover, under this assumption the termination of the method can be detected in advance by checking whether the asymptotic conditions on the eigenvalues are satisfied. This is the main distinction with the Gerhold-kauers method \cite{GerholdKauers2005}.

In other works, analytic techniques were used to prove positivity \cite{MelczerMezzarobba2022,BostanYurkevich2022a}.  Our approach, instead, relies on algebraic tools, as we aim to generalize the cone-based method to recurrences with real parameters.




\bibliographystyle{abbrv} \bibliography{biblio}
\end{document}